\pdfoutput=1

\documentclass[conference,compsoc]{IEEEtran}
% Some/most Computer Society conferences require the compsoc mode option,
% but others may want the standard conference format.
%
% If IEEEtran.cls has not been installed into the LaTeX system files,
% manually specify the path to it like:
% \documentclass[conference,compsoc]{../sty/IEEEtran}

% Some very useful LaTeX packages include:
% (uncomment the ones you want to load)

% *** MISC UTILITY PACKAGES ***
%
%\usepackage{ifpdf}
% Heiko Oberdiek's ifpdf.sty is very useful if you need conditional
% compilation based on whether the output is pdf or dvi.
% usage:
% \ifpdf
%   % pdf code
% \else
%   % dvi code
% \fi
% The latest version of ifpdf.sty can be obtained from:
% http://www.ctan.org/pkg/ifpdf
% Also, note that IEEEtran.cls V1.7 and later provides a builtin
% \ifCLASSINFOpdf conditional that works the same way.
% When switching from latex to pdflatex and vice-versa, the compiler may
% have to be run twice to clear warning/error messages.

% *** CITATION PACKAGES ***
%
\ifCLASSOPTIONcompsoc
  % IEEE Computer Society needs nocompress option
  % requires cite.sty v4.0 or later (November 2003)
  \usepackage[nocompress]{cite}
\else
  % normal IEEE
  \usepackage{cite}
\fi
% cite.sty was written by Donald Arseneau
% V1.6 and later of IEEEtran pre-defines the format of the cite.sty package
% \cite{} output to follow that of the IEEE. Loading the cite package will
% result in citation numbers being automatically sorted and properly
% "compressed/ranged". e.g., [1], [9], [2], [7], [5], [6] without using
% cite.sty will become [1], [2], [5]--[7], [9] using cite.sty. cite.sty's
% \cite will automatically add leading space, if needed. Use cite.sty's
% noadjust option (cite.sty V3.8 and later) if you want to turn this off
% such as if a citation ever needs to be enclosed in parenthesis.
% cite.sty is already installed on most LaTeX systems. Be sure and use
% version 5.0 (2009-03-20) and later if using hyperref.sty.
% The latest version can be obtained at:
% http://www.ctan.org/pkg/cite
% The documentation is contained in the cite.sty file itself.
%
% Note that some packages require special options to format as the Computer
% Society requires. In particular, Computer Society  papers do not use
% compressed citation ranges as is done in typical IEEE papers
% (e.g., [1]-[4]). Instead, they list every citation separately in order
% (e.g., [1], [2], [3], [4]). To get the latter we need to load the cite
% package with the nocompress option which is supported by cite.sty v4.0
% and later.

% *** GRAPHICS RELATED PACKAGES ***
%
\ifCLASSINFOpdf
  \usepackage[pdftex]{graphicx}
  % declare the path(s) where your graphic files are
  % \graphicspath{{../pdf/}{../jpeg/}}
  % and their extensions so you won't have to specify these with
  % every instance of \includegraphics
  % \DeclareGraphicsExtensions{.pdf,.jpeg,.png}
\else
  % or other class option (dvipsone, dvipdf, if not using dvips). graphicx
  % will default to the driver specified in the system graphics.cfg if no
  % driver is specified.
  \usepackage[dvips]{graphicx}
  % declare the path(s) where your graphic files are
  % \graphicspath{{../eps/}}
  % and their extensions so you won't have to specify these with
  % every instance of \includegraphics
  % \DeclareGraphicsExtensions{.eps}
\fi
% graphicx was written by David Carlisle and Sebastian Rahtz. It is
% required if you want graphics, photos, etc. graphicx.sty is already
% installed on most LaTeX systems. The latest version and documentation
% can be obtained at: 
% http://www.ctan.org/pkg/graphicx
% Another good source of documentation is "Using Imported Graphics in
% LaTeX2e" by Keith Reckdahl which can be found at:
% http://www.ctan.org/pkg/epslatex
%
% latex, and pdflatex in dvi mode, support graphics in encapsulated
% postscript (.eps) format. pdflatex in pdf mode supports graphics
% in .pdf, .jpeg, .png and .mps (metapost) formats. Users should ensure
% that all non-photo figures use a vector format (.eps, .pdf, .mps) and
% not a bitmapped formats (.jpeg, .png). The IEEE frowns on bitmapped formats
% which can result in "jaggedy"/blurry rendering of lines and letters as
% well as large increases in file sizes.
%
% You can find documentation about the pdfTeX application at:
% http://www.tug.org/applications/pdftex

% *** MATH PACKAGES ***
%
\usepackage{amssymb,amsmath,amsthm}
\usepackage{enumerate}

%Theorem Environments
\newtheorem{theorem}{Theorem}[section]
\newtheorem{lemma}[theorem]{Lemma}

% A popular package from the American Mathematical Society that provides
% many useful and powerful commands for dealing with mathematics.
%
% Note that the amsmath package sets \interdisplaylinepenalty to 10000
% thus preventing page breaks from occurring within multiline equations. Use:
%\interdisplaylinepenalty=2500
% after loading amsmath to restore such page breaks as IEEEtran.cls normally
% does. amsmath.sty is already installed on most LaTeX systems. The latest
% version and documentation can be obtained at:
% http://www.ctan.org/pkg/amsmath

% *** SPECIALIZED LIST PACKAGES ***
%
\usepackage{algorithmic}
\usepackage[lined, boxed, comments numbered, ruled, linesnumbered]{algorithm2e}
\ifCLASSOPTIONcompsoc
  \usepackage[caption=false,font=footnotesize,labelfont=sf,textfont=sf]{subfig}
\else
  \usepackage[caption=false,font=footnotesize]{subfig}
\fi
\hyphenation{op-tical net-works semi-conduc-tor}

\begin{document}
%
% paper title
% Titles are generally capitalized except for words such as a, an, and, as,
% at, but, by, for, in, nor, of, on, or, the, to and up, which are usually
% not capitalized unless they are the first or last word of the title.
% Linebreaks \\ can be used within to get better formatting as desired.
% Do not put math or special symbols in the title.
\title{LGRASS: Linear Graph Spectral Sparsification for Final Task of the 3rd ACM-China International Parallel Computing Challenge
}

% author names and affiliations
% use a multiple column layout for up to three different
% affiliations

\author{\IEEEauthorblockN{Yuxuan Chen, Jiyan Qiu, Zidong Han, Chenhan Bai}
\IEEEauthorblockA{CNIC, Chinese Academy of Sciences and University of Chinese Academy of Sciences\\
Beijing, China \\
{\{chenyuxuan, qiujiyan, zdhan, chbai\}@cnic.cn}}
}    

% conference papers do not typically use \thanks and this command
% is locked out in conference mode. If really needed, such as for
% the acknowledgment of grants, issue a \IEEEoverridecommandlockouts
% after \documentclass

% for over three affiliations, or if they all won't fit within the width
% of the page (and note that there is less available width in this regard for
% compsoc conferences compared to traditional conferences), use this
% alternative format:
% 
%\author{\IEEEauthorblockN{Michael Shell\IEEEauthorrefmark{1},
%Homer Simpson\IEEEauthorrefmark{2},
%James Kirk\IEEEauthorrefmark{3}, 
%Montgomery Scott\IEEEauthorrefmark{3} and
%Eldon Tyrell\IEEEauthorrefmark{4}}
%\IEEEauthorblockA{\IEEEauthorrefmark{1}School of Electrical and Computer Engineering\\
%Georgia Institute of Technology,
%Atlanta, Georgia 30332--0250\\ Email: see http://www.michaelshell.org/contact.html}
%\IEEEauthorblockA{\IEEEauthorrefmark{2}Twentieth Century Fox, Springfield, USA\\
%Email: homer@thesimpsons.com}
%\IEEEauthorblockA{\IEEEauthorrefmark{3}Starfleet Academy, San Francisco, California 96678-2391\\
%Telephone: (800) 555--1212, Fax: (888) 555--1212}
%\IEEEauthorblockA{\IEEEauthorrefmark{4}Tyrell Inc., 123 Replicant Street, Los Angeles, California 90210--4321}}

% use for special paper notices
%\IEEEspecialpapernotice{(Invited Paper)}

% make the title area
\maketitle

% As a general rule, do not put math, special symbols or citations
% in the abstract
\begin{abstract}
This paper presents our solution for optimization task of the 3rd ACM-China IPCC. By the complexity analysis, we identified three time-consuming subroutines of original algorithm: marking edges, computing pseudo inverse and sorting edges. These subroutines becomes the main performance bottleneck owing to their super-linear time complexity.
To address this, we proposed LGRASS, a linear graph spectral sparsification algorithm to run in strictly linear time. LGRASS takes advantage of spanning tree properties and efficient algorithms to optimize bottleneck subroutines. Furthermore, we crafted a parallel processing scheme for LGRASS to make full use of multi-processor hardware. Experiment shows that our proposed method fulfils the task in dozens of milliseconds on official test cases and keep its linearity as graph size scales up on random test cases.
\end{abstract}

% no keywords

% For peer review papers, you can put extra information on the cover
% page as needed:````````````
% \ifCLASSOPTIONpeerreview
% \begin{center} \bfseries EDICS Category: 3-BBND \end{center}
% \fi
%
% For peerreview papers, this IEEEtran command inserts a page break and
% creates the second title. It will be ignored for other modes.
\IEEEpeerreviewmaketitle

\section{Introduction}

The final task of the 3rd ACM-China IPCC is to  optimize a program for graph spectral sparsification. In this competition, three test cases and source code of a program are provided for participants. These test cases are used for performance evaluation, which consist of 4K, 7K and 16K nodes  respectively.
Participants of this competition need to craft a program that runs as fast as possible while outputs the same result as provided program on designated hardware platform.

For this task, we first proposed LGRASS, a graph spectral sparsification algorithm which runs in strictly linear time. Then we crafted a parallel processing scheme to optimize the hot-spot subroutines of LGRASS. The details of the two-step improvements is discussed in the following sections.

\section{Motivation}

\begin{figure}[htbp]
    \centering
    \subfloat[Baseline]{
    \includegraphics[width=1.0in]{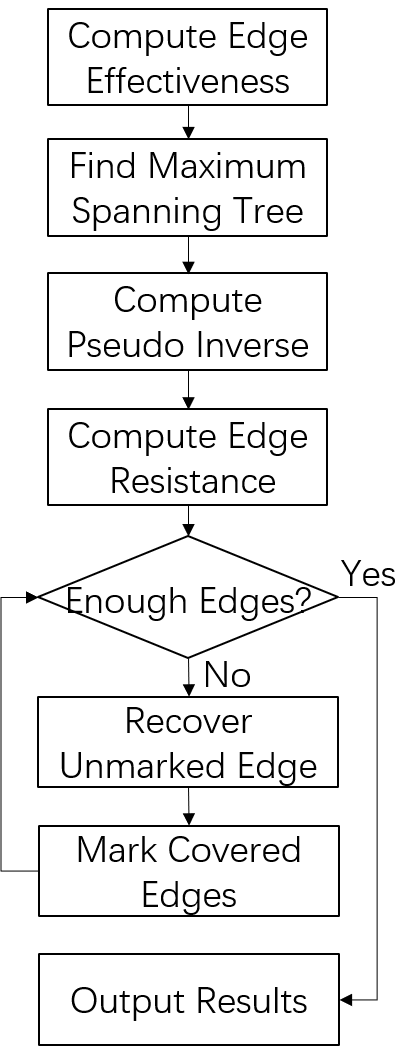}
    \label{figure:baseline}
    }
    \subfloat[Basic LGRASS]{
    \includegraphics[width=1.0in]{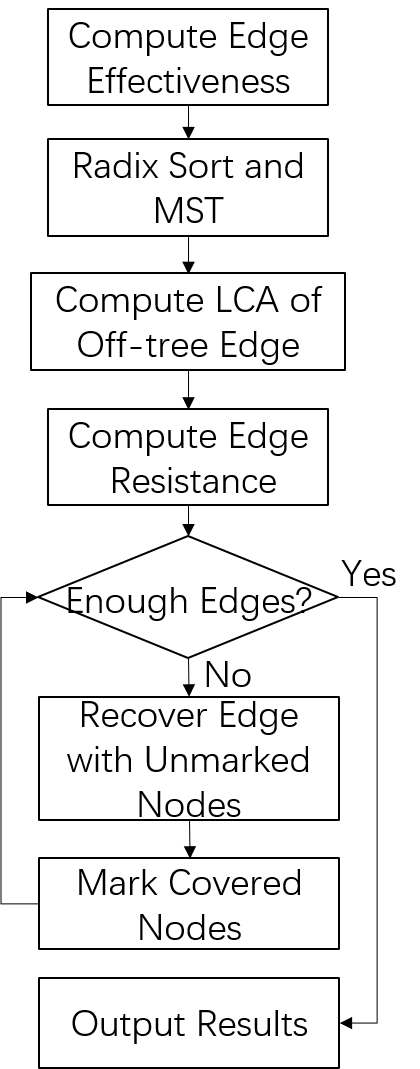}
    \label{figure:LGRASS}
    }
    \subfloat[Parallel LGRASS]{
    \includegraphics[width=1.0in]{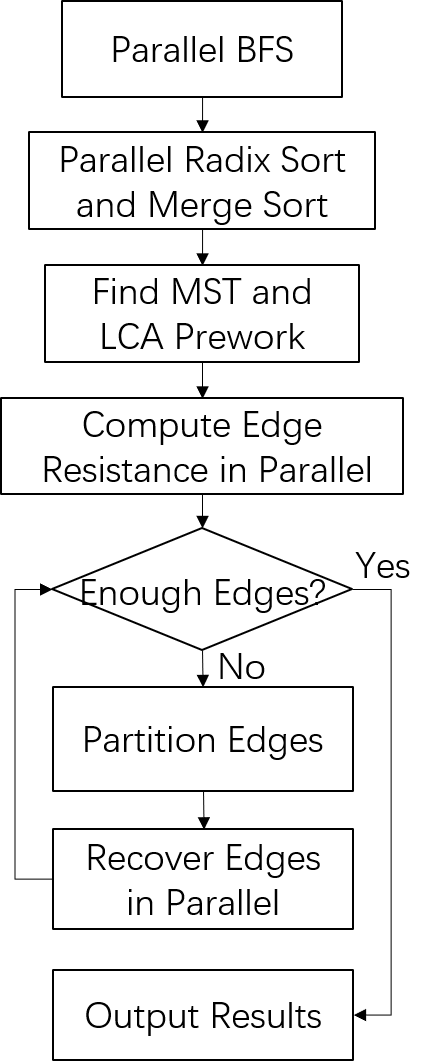}
    \label{figure:pLGRASS}
    }
    \caption{Program Procedures}
\end{figure}

\begin{table}[ht]
    \centering
    \caption{Time breakdown of baseline program}
    \begin{tabular}{|c|c|c|c|c|c|c|}
        \hline
         & EFF & MST & INV & RES & MARK & ALL\\
        \hline
        Case 1 & 10.7ms & 205.3ms & 10.1s & 33.1ms & 21.7min & 22.9min \\
        \hline
        Case 2 & 5.6ms & 183.6ms & 52.4s & 13.4ms & 24.6min & 25.5min\\
        \hline
    \end{tabular}
    \label{tab:baseline_breakdown}
\end{table}

To find hot-spot subroutines, we made a coarse-grained performance profiling for baseline program, whose procedure is demonstrated in Fig. \ref{figure:baseline}.
As is shown in Table \ref{tab:baseline_breakdown}, the running time of the baseline program, i.e. provided program, is dominated by its subroutine of marking edges. By the computational complexity analysis, there are three subroutines which costs  super-linear time: marking edges, computing pseudo inverse and sorting edges. These subroutines costs over 99.9\% time of the overall running time, which means their super-linear running time has become the main performance bottleneck.

Thus, we propose LGRASS, to improve the time complexity of these three subroutines and reduce their running time to be \emph{linear}, and then optimize it for multiprocessor systems. The procedures of basic version and parallel version of our proposed method are demonstrated in Fig. \ref{figure:LGRASS} and Fig. \ref{figure:pLGRASS} respectively. The novel techniques used by basic LGRASS is presented in Section \ref{sec:BasicLGRASS} and the parallel processing scheme of parallel LGRASS is depicted in Section \ref{sec:ParallelLGRASS}.

\section{The Design of LGRASS}
\label{sec:BasicLGRASS}
\subsection{Marking Edges in Linear Time}

The pseudocode of marking edges in baseline program is described in Algorithm \ref{alg:baseline_marking}. The running time of Algorithm \ref{alg:baseline_marking} is dominated by a three-level nested loop with time complexity $O(N^2L)$, where $N$ is the number of nodes and $L$ is the number of edges. We observe that marking edges is arduous, while checking the edge mark only needs \emph{one} memory access, which is in constant time.

\begin{algorithm}
\caption{Marking Edges in Baseline Program}
\label{alg:baseline_marking}
\KwIn{Edge $(u, v)$}

$lca$ = the lowest common ancestor of $u$ and $v$ \\
$beta$ = max(min(depth[$u$], depth[$v$]) - depth[$lca$], 1) \\
$S_1$ = the nodes covered by $u$ with distance $beta$ \\
$S_2$ = the nodes covered by $v$ with distance $beta$ \\
\For{$x \in S_1$}{
    \For{$y \in S_2$}{
        \For{$e \in E$} {
            \If{e = (x, y) or e = (y, x)}{
                Mark edge $e$
            }
        }
    }
}
\end{algorithm}

To mitigate the cost of marking edges, we start with the properties of \emph{marked edges}. An edge marked by $(u,v)$ must consist of two nodes covered by $u$ and $v$ respectively. By attaching edge marks to covered nodes, we can achieve $O(N)$ and $O(L)$ time complexity for marking edges and checking the edge mark respectively, which is described in Algorithm \ref{alg:linear_marking} and Algorithm \ref{alg:linear_checking}.

\begin{algorithm}
\caption{Marking Edges in Linear Time}
\label{alg:linear_marking}
\KwIn{Edge $e$ = $(u, v)$}

$lca$ = the lowest common ancestor of $u$ and $v$ \\
$beta$ = max(min(depth[$u$], depth[$v$]) - depth[$lca$], 1) \\
$S_1$ = the nodes covered by $u$ with distance $beta$ \\
$S_2$ = the nodes covered by $v$ with distance $beta$ \\
\For{$x \in S_1$}{
    $M_x$ = $M_x \cup \{ e_1 \}$
}
\For{$y \in S_2$}{
    $M_y$ = $M_y \cup \{ e_2 \}$
}
\end{algorithm}

\begin{algorithm}
\caption{Checking Edge Mark in Linear Time}
\label{alg:linear_checking}
\KwIn{Edge $(u, v)$}
\KwOut{Mark flag $flag$}

$flag$ = False \\
\For {$e \in E$ } {
    \If{$e_1 \in M_u$ and $e_2 \in M_v$}{
        $flag$ = True
    }
    \If{$e_1 \in M_v$ and $e_2 \in M_u$}{
        $flag$ = True
    }    
}
\end{algorithm}

By the analysis of edges in test cases,  we observe that the majority of off-tree edges are crossing edges, i.e. the edges that connect two nodes such that they do not have any ancestor and a descendant relationship between them. 
We then focus on the properties of crossing edges, giving Lemma \ref{lemma:rlca} and Lemma \ref{lemma:lca}.

\begin{lemma}
\label{lemma:rlca}
For edge (x, y) and crossing edge (u, v), if edge (x, y) is covered by edge (u, v), then LCA(x, y) = LCA(u, v).
\end{lemma}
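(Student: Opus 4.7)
The plan is to unpack the definitions of $S_1$, $S_2$, and $\beta$ in the crossing-edge setting and then do a short case analysis to show that every admissible pair $(x,y)$ has its tree-LCA pinned at $\operatorname{LCA}(u,v)$.

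First I would set $\ell := \operatorname{LCA}(u,v)$ and observe that, because $(u,v)$ is a crossing edge, neither $u$ nor $v$ equals $\ell$; in particular $\ell$ is a strict ancestor of both and
\[
\beta \;=\; \min(\operatorname{depth}(u),\operatorname{depth}(v)) - \operatorname{depth}(\ell)\;\geq\;1.
\]
Assume without loss of generality $\operatorname{depth}(u)\le\operatorname{depth}(v)$, so $\beta=\operatorname{depth}(u)-\operatorname{depth}(\ell)$.

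Next I would unpack the covering sets. Since the cycle induced by $(u,v)$ in the spanning tree goes $u \to \ell \to v$, the set $S_1$ of nodes covered by $u$ at tree distance $\beta$ is exactly the ancestral path from $u$ up to and including $\ell$. For $S_2$, the topmost node reached from $v$ lies at depth $\operatorname{depth}(v)-\beta = \operatorname{depth}(v)-\operatorname{depth}(u)+\operatorname{depth}(\ell) \geq \operatorname{depth}(\ell)$, so every element of $S_2$ lies on the $v$-to-$\ell$ path and no element of $S_2$ is a strict ancestor of $\ell$; moreover $\ell$ itself appears in $S_2$ only in the borderline case $\operatorname{depth}(u)=\operatorname{depth}(v)$.

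Then I would split on whether $x$ or $y$ equals $\ell$. If $x=\ell$, then $y\in S_2$ is a descendant of $\ell$, so $\operatorname{LCA}(x,y)=\ell$; the symmetric case is identical. Otherwise $x$ lies strictly in the subtree rooted at the child of $\ell$ on the $u$-side and $y$ lies strictly in the subtree rooted at the child of $\ell$ on the $v$-side. Because $(u,v)$ is crossing these two children are distinct, the two subtrees are disjoint, and neither of $x,y$ is an ancestor of the other; hence $\operatorname{LCA}(x,y)=\ell=\operatorname{LCA}(u,v)$.

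The main obstacle is the bookkeeping around the asymmetric choice of $\beta$: one must argue carefully that the \emph{longer} side $S_2$ stops at or below $\ell$ (never above), so that covered edges cannot have an endpoint above $\ell$. Once that is nailed down, the lemma follows essentially by the defining property of the LCA applied to two disjoint child-subtrees of $\ell$.
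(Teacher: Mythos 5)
Your overall route is the same as the paper's: bound $\beta$ by the tree distance from each endpoint of $(u,v)$ to $\ell=\mathrm{LCA}(u,v)$, conclude that the covered sets cannot reach strictly above $\ell$, and pin $\mathrm{LCA}(x,y)$ at $\ell$ because the two sides live in disjoint child subtrees of $\ell$. The one step you should tighten is your exact characterization of the covered sets: you assert that $S_1$ is \emph{exactly} the ancestral path from $u$ up to $\ell$ and that every element of $S_2$ lies on the $v$-to-$\ell$ path. Under the semantics of the baseline algorithm (and of the paper's own proof), ``covered by $u$ with distance $\beta$'' means \emph{all} nodes within tree distance $\beta$ of $u$, so $S_1$ and $S_2$ also contain descendants and side-branch nodes, and your path characterizations are false as stated. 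What is actually needed --- and what the paper proves --- is only a containment: since $(u,v)$ is crossing, $\beta\le\mathrm{dist}(u,\ell)$ and $\beta\le\mathrm{dist}(v,\ell)$, so a node within distance $\beta$ of $u$ (resp.\ $v$) cannot leave the child subtree $T_1$ (resp.\ $T_2$) of $\ell$ containing $u$ (resp.\ $v$) except by landing exactly on $\ell$; hence $S_1\subseteq T_1\cup\{\ell\}$, $S_2\subseteq T_2\cup\{\ell\}$, and $T_1\cap T_2=\emptyset$. Your final case analysis (one endpoint equal to $\ell$, or the two endpoints in the two disjoint subtrees) then goes through verbatim, so the argument is repairable with this weaker containment; as written, however, the middle step rests on a mischaracterization of the covered sets rather than on the distance-budget argument that justifies it.
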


\begin{proof}
Without loss of generality, assume $x$ is covered by $u$ and $y$ is covered by $v$.

Let $w$ = LCA($u$, $v$). Note that $(u, v)$ is a crossing edge, which means $w \neq u$, then depth[$u$] $>$ depth[$w$]. So $beta \leq $ depth[$u$] - depth[$w$]. By $w \neq u$, $u$ is in a subtree of $w$, denoted by $T_1$. Note that $x$ is covered by $u$ with distance $beta \leq $ depth[$u$] - depth[$w$], then $x$ is in $T_1$ or $x$ is $w$. 
Similarly, $y$ is in the subtree $T_2$ of $w$, or $y$ is $w$. Note that $T_1 \cap T_2$ = $\emptyset$, so LCA($x$, $y$) = $w$ = LCA($u$, $v$).
\end{proof}

\begin{lemma}
\label{lemma:lca}
For crossing edge (x, y) and crossing edge (u, v), if node x and node y are covered by node u or node v, and LCA(x, y) = LCA(u, v), then edge (x, y) is covered by edge (u, v).
\end{lemma}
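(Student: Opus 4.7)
The plan is to mirror the proof of Lemma~\ref{lemma:rlca} in reverse: under the stated hypotheses, I would show that exactly one of $x, y$ lies in $S_1$ (the set covered by $u$ with distance $beta$) and the other in $S_2$ (the set covered by $v$), which by the definition used in Algorithm~\ref{alg:baseline_marking} is precisely the statement that edge $(x,y)$ is covered by edge $(u,v)$.

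First I would set $w = \mathrm{LCA}(u,v) = \mathrm{LCA}(x,y)$ and let $T_1, T_2$ denote the subtrees rooted at the children of $w$ that contain $u$ and $v$ respectively. Because $(u,v)$ is a crossing edge, the same computation used in the proof of Lemma~\ref{lemma:rlca} gives $beta \leq \min(\mathrm{depth}[u], \mathrm{depth}[v]) - \mathrm{depth}[w]$, so walking up at most $beta$ steps from $u$ (resp.~$v$) cannot pass through $w$ into the other subtree. Hence $S_1 \subseteq T_1 \cup \{w\}$ and $S_2 \subseteq T_2 \cup \{w\}$, which are essentially disjoint except possibly at $w$.

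The central step is ruling out the case where both $x$ and $y$ are covered by the same endpoint. Suppose both lie in $S_1 \subseteq T_1 \cup \{w\}$. If neither equals $w$, both lie in $T_1$; then their LCA also lies in $T_1$ and cannot be $w$, contradicting $\mathrm{LCA}(x,y) = w$. If one of them equals $w$, then that endpoint is an ancestor of the other, contradicting the assumption that $(x,y)$ is a crossing edge. The symmetric argument rules out both being in $S_2$, so without loss of generality $x \in S_1$ and $y \in S_2$, and the conclusion follows from the definition of coverage.

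The step I expect to need the most care is the boundary case where $w$ itself lies in $S_1 \cap S_2$; without the crossing-edge hypothesis on $(x,y)$, placing $x$ or $y$ at $w$ would be consistent with $\mathrm{LCA}(x,y) = w$ and would break the argument. The crossing-edge hypothesis on $(x,y)$ is exactly what disposes of this corner case, and I would make sure to invoke it explicitly when excluding $x = w$ or $y = w$.
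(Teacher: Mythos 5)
Your proposal is correct and follows essentially the same route as the paper: both reduce to the case where $x$ and $y$ are covered by the same endpoint and derive a contradiction from the crossing-edge hypothesis on $(x,y)$ together with $\mathrm{LCA}(x,y)=w$ and the bound $beta \le \mathrm{depth}[u]-\mathrm{depth}[w]$. The only cosmetic difference is that you phrase the contradiction via subtree containment ($S_1 \subseteq T_1 \cup \{w\}$, as in the proof of Lemma~\ref{lemma:rlca}), while the paper phrases it as a distance bound ($\max(\mathrm{dist}(x,u),\mathrm{dist}(y,u)) > beta$), which are equivalent observations.
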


\begin{proof}
We prove this lemma by the method of contradiction. Assume the edge $(x, y)$ is not covered by edge $(u, v)$. Without loss of generality, assume node $x$ and $y$ are both covered by node $u$.  

Let $w$ = LCA($x$, $y$) = LCA($u$, $v$). Then $beta$ = min(depth[$u$], depth[$v$]) - depth[$w$] $\leq$ depth[$u$] - depth[$w$].
Note that $(x, y)$ is a crossing edge and LCA($x$, $y$) = $w$, then max(dist($x$, $u$), dist($y$, $u$)) $>$ dist($w$, $u$) = depth[$u$] - depth[$w$] $\geq beta$. So at least one node between $x$ and $y$ is not covered by $u$, which contradicts with the assumption.
\end{proof}

The converse proposition of Lemma \ref{lemma:lca} is also true, of which the proof is omitted. Then we have Algorithm \ref{alg:cross_marking} and Algorithm \ref{alg:cross_checking} to process crossing-edge marking and checking.

\begin{algorithm}
\caption{Crossing-edge Marking}
\label{alg:cross_marking}
\KwIn{Crossing edge $e$ = $(u, v)$}

$lca$ = the lowest common ancestor of $u$ and $v$ \\
$beta$ = min(depth[$u$], depth[$v$]) - depth[$lca$] \\
$S_1$ = the nodes covered by $u$ with distance $beta$ \\
$S_2$ = the nodes covered by $v$ with distance $beta$ \\
\For{$x \in S_1$}{
    $M_{lca,x}$ = $M_{lca,x} \cup \{ e \}$
}
\For{$y \in S_2$}{
    $M_{lca,y}$ = $M_{lca,y} \cup \{ e \}$
}
\end{algorithm}

\begin{algorithm}
\caption{Checking Crossing-edge Mark}
\label{alg:cross_checking}
\KwIn{Edge $(u, v)$}
\KwOut{Mark flag $flag$}

$lca$ = the lowest common ancestor of $u$ and $v$ \\

\uIf{$u \neq lca$ and $v \neq lca$}{
    $flag$ = $M_{lca,u} \cap M_{lca,v}$ is not empty
}\Else{
    $flag$ = False \\
    \For{$e \in M_{lca,u} \cap M_{lca,v}$}{
        \If{(u, v) is covered by $e$}{
            $flag$ = True
        }
    }
}

\end{algorithm}

Although Algorithm \ref{alg:cross_checking} has no improvement on time complexity compared with Algorithm \ref{alg:linear_checking}, it can leverage the classic acceleration techniques for set operations, such as bitmap and SIMD, to reduce both running time and memory footprint.

As for non-crossing edges, the main obstacle is their aftereffects, which may withdraw marks of some crossing edges to effect more edges. So we designed an extra stage to recover non-crossing edges and deal with the complex aftereffects, which is described in Algorithm \ref{alg:noncross_fix}.

\begin{algorithm}
\caption{Recovering Non-crossing Edges}
\label{alg:noncross_fix}

\For{$e \in E$ }{
    $flipped$ = False \\
    \If{(e.isMarked and e.isEnforced) or \\ 
    (not e.isMarked and e.isWithdrawn)}{
        $flag$ = checking all-edge mark for $e$ \\
        \If{$flag \neq e.isMarked$ }{
            $flipped$ = True \\
            $e.isMarked$ = $flag$
        }
    }
    \If{(not e.isCrossing and not e.isMarked) \\ or $flipped$}{
        \For{$e' \in $ edges covered by $e$}{
            \uIf{$e.isMarked$}{
                $e'.isEnforced$ = True
            }\Else{
                $e'.isWithdrawn$ = True
            }
        }
    }
}
\end{algorithm}

\subsection{Computing Resistance in Linear Time}

The subroutine of computing resistance in baseline program is involved with operations on Laplacian matrix of minimal spanning tree. Considering the size of Laplacian matrix, the running time of operations on Laplacian matrix is at least quadratic. To achieve linear time complexity, we make use of the resistance computation subroutine of [1], which is proved to run in $O(L)$ time.

In our implementation, we made a slight change on LCA computation but greatly reduce its time consumption. For an off-tree edge $(u, v)$, we first check if $u$ and $v$ are in the same subtree of root. If so, compute the LCA of $u$ and $v$ as former procedure. Otherwise, the LCA of $u$ and $v$ must be root.

\subsection{Sorting Edges in Linear Time}

The baseline program utilizes stable sort function of STL library, whose time complexity is $O(L\log L)$. We notice that our data to sort is a bunch of non-negative double-precision floating-point number excluding NaN and infinity. By the IEEE standard [2], for 64-bit binary representations $a$ and $b$, if FP($a$) $<$ FP($b$), then INT($a$) $<$ INT($b$). That is, we can sort our data in an INT64 manner. 

We choose radix sort as our basic sort algorithm for its stability, linearity and simplicity. Each 64-bit number is divided into eight unsigned 8-bit integer keys, because 256 int-type bucket counters can be put into  exactly one page. With one round of scanning and eight rounds of relocation, the edges can be sorted in $O(L)$ time.

\section{Parallel Optimizations}
\label{sec:ParallelLGRASS}
\subsection{An Analysis of Hot-spot Subroutines}
Table \ref{tab:lgrass_breakdown} shows the time breakdown of basic LGRASS. 
By our analysis, the subroutine of marking edges consumes up to 43.5\% of overall running time, while the other subroutines except LCA computation consumes 9.6\%-25.1\% of overall running time each.

\begin{table}[ht]
    \centering
    \caption{Time breakdown of basic LGRASS}
    \begin{tabular}{|c|c|c|c|c|c|c|}
        \hline
         & EFF & MST & LCA & RES & MARK & ALL\\
        \hline
        Case 1 & 2.8ms & 4.3ms & 2.1ms & 3.3ms & 4.6ms & 17.1ms \\
        \hline
        Case 2 & 1.5ms & 2.4ms & 1.1ms & 1.5ms & 5.0ms & 11.5ms\\
        \hline
        Case 3 & 14.9ms & 14.1ms & 9.5ms & 10.4ms & 33.5ms & 82.4ms \\
        \hline
    \end{tabular}
    \label{tab:lgrass_breakdown}
\end{table}

Diving into these subroutines, we identified four parallel-optimization opportunities: edge marking, resistance computation, BFS (for computing effectiveness) and merge sort (for MST edge sort and resistance edge sort).
In the following subsections, we depict how we exploit these opportunities.

\subsection{Parallel Edge Marking}

By Lemma \ref{lemma:rlca}, the task of crossing-edge marking can be partitioned based on the LCA of off-tree edges. However, the majority of off-tree edges recognize root as their LCA, such that the workload of root-LCA subtask becomes much heavier than that of other subtasks. So we made a further partition based on which two subtrees those edges connect. The two-step partition is implemented with a carefully designed mapping function:

\begin{equation*} F(u, v) =
\begin{cases}
LCA(u, v) &\text{if $LCA(u, v) \neq root$} \\
N &\text{if $u = root$ or $v = root$} \\
N + 1 + \binom{S_1}{2} + S_2 &\text{otherwise}
\end{cases} \end{equation*}
where $N$ is the number of nodes, $S_1$ and $S_2$ are the maximal and minimal subtree index of edge $(u, v)$ respectively. And the node index and subtree index are both ranged from 0. After partition, each subtask can be independently processed with Algorithm \ref{alg:cross_marking} and Algorithm \ref{alg:cross_checking}. When all subtasks are done, Algorithm \ref{alg:noncross_fix} is still necessary to deal with the aftereffects of non-crossing edges. 
The comparison of edge marking procedure between basic LGRASS and parallel LGRASS is demonstrated in Fig. \ref{figure:parallel}.

Besides, we implemented a greedy scheduling algorithm for dynamic task dispatching, which outperforms static task dispatching when using more than 4 threads.

\subsection{Parallel Resistance Computation }

\begin{figure}[htbp]
    \centering
    \subfloat[Basic LGRASS]{
    \includegraphics[width=1.5in]{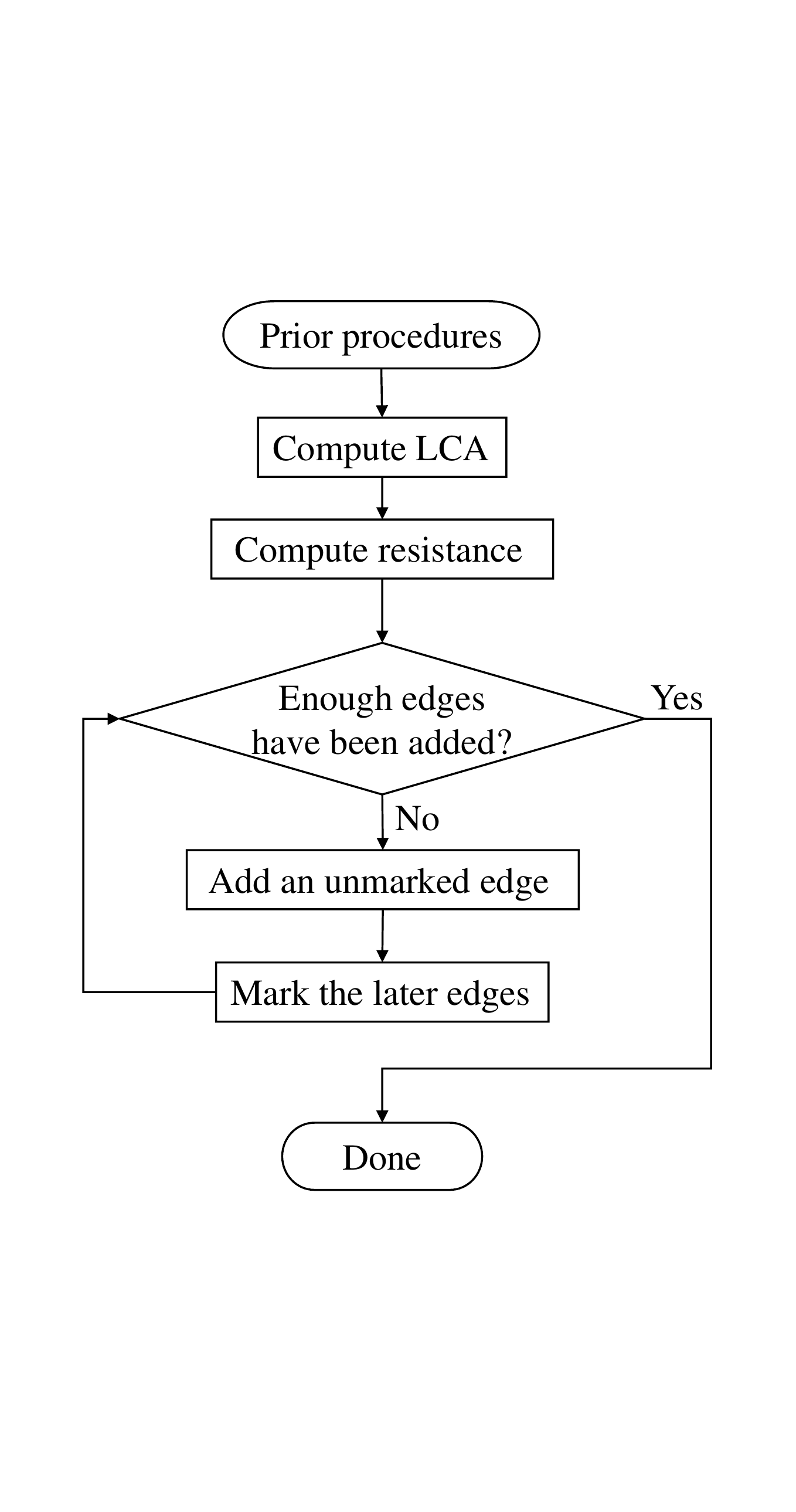}
    }
    \subfloat[Parallel LGRASS]{
    \includegraphics[width=1.5in]{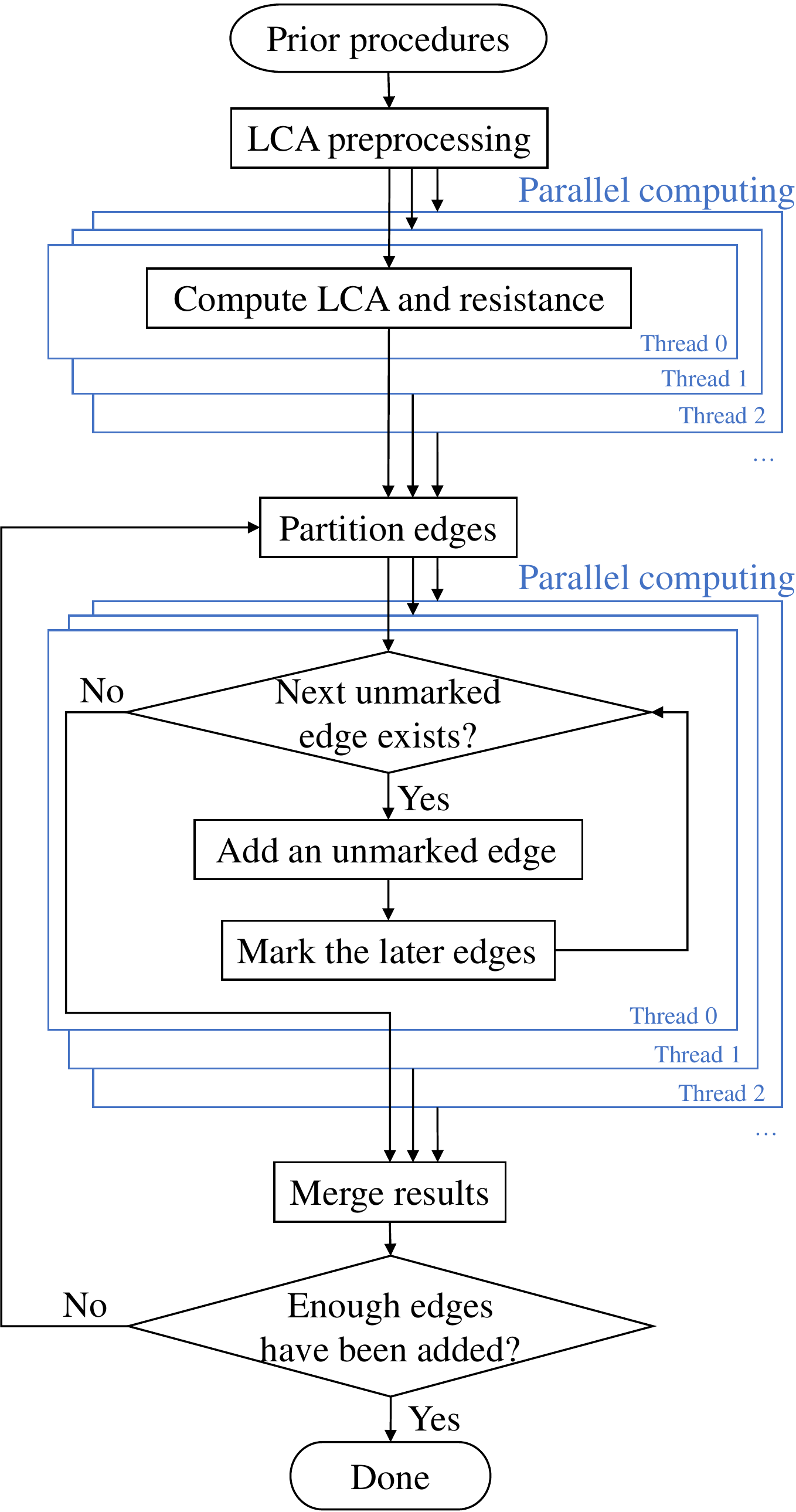}
    }
    \caption{Procedures of Edge Marking and Resistance Computation}
    \label{figure:parallel}
\end{figure}

To make full use of multiprocessors, the resistance computation task is uniformly partitioned for each thread. For the same purpose, part of LCA computation is offloaded to resistance computation subroutine, and an efficient online LCA algorithm is implemented replace the former one.
Fig. \ref{figure:parallel} shows the difference of LCA and resistance computation between the basic version and parallel version of LGRASS.

\subsection{Parallel Breadth-first Search}

By our profiling result, the running time of computing effectiveness is dominated by BFS. To alleviate this, we implemented a level-synchronous parallel BFS algorithm. By our implementation, each thread first obtains vertices of current layer from a concurrent queue, then searches for the vertices of next layer and pushes them into the concurrent queue for next layer. We employ an atomic variable to record the number of elements in a concurrent queue, which ensures the linearizability of concurrent operations for multiple threads. Besides, we use atomic operations on distance array in relaxed memory order to prevent a vertex from being added to the queue too many times.

We also took several tricks for better performance: relaxed memory order for mutex variables, hand-crafted assembly code for compare-and-set operations and local-to-global merging for layer aggregation. 
Fig. \ref{figure:pbfs} is an example of our parallel BFS scheme.

\begin{figure}[htbp]
    \centering
    \includegraphics[width=0.8\linewidth]{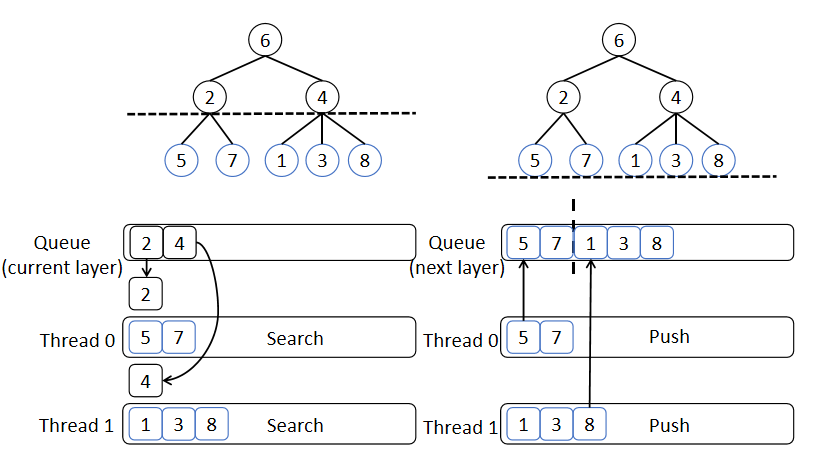}
    \caption{Parallel Breadth-first Search}
    \label{figure:pbfs}
\end{figure}

\subsection{Parallel Merge Sort}

Considering the inherent intractability of sequential radix sort, we take merge sort as the main framework of our parallel sort and regard radix sort as its submodule. 

By our design, each thread first sorts a block of the data by radix sort, then merge the blocks in parallel. If there are $P$ threads for merge sort, the thread with the longest running time need to perform up to $(2-\frac{1}{P})L$ comparisons. 

We observe that only a small fraction of sorted data is accessed by later procedure, so we leave the final merger to later procedure
and reduce the maximum number of comparisons down to $(1-\frac{1}{P})L$ in merge sort.

Furthermore, when sorting edges by resistance, we can merge only top-$K$ edges, where $K$ depends on the number of edges for partition. Then each thread only needs to perform no more than $(\lfloor \log_2{P} \rfloor - 1)K$ times comparisons in merge sort. The procedure is shown in Fig. \ref{figure:pmergesort}.

\begin{figure}[htbp]
    \centering
    \includegraphics[width=0.8\linewidth]{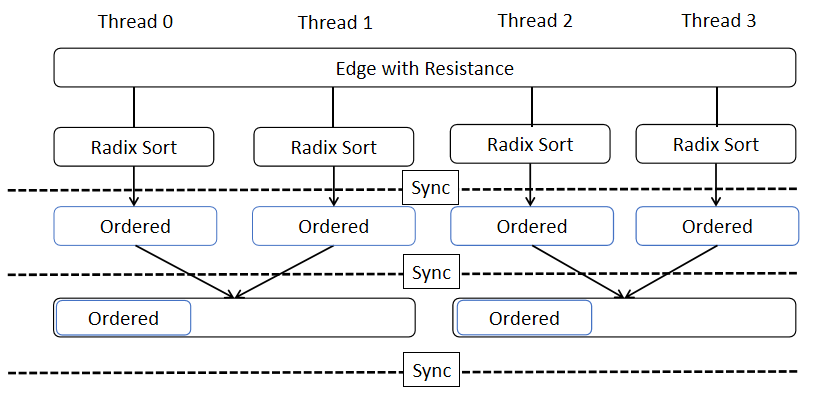}
    \caption{Parallel Merge Sort of Edges with Resistance}
    \label{figure:pmergesort}
\end{figure}

%Furthermore, when we sort the edges by resistance, we can merge only part of the edges,  Thread 0 can only perform  $(\lfloor \log{P} \rfloor - 1)M$ times comparisons, where $M$ is the number of edges to sort. $M$ depends on the size of the partition.

%Thread $T$ sorts $\frac{L}{P}$ edges by radix sort, then merge for $\lfloor \log{T} \rfloor$ times. The number of edges in each merging is twice that of the last time.Compared with the LGRASS, thread P with the longest running time has made $L-\frac{L}{P}$ more times comparisons and the size of radix sort is $L-\frac{L}{P}$ less.

%Furthermore, when we sort the edges by resistance, we can merge only part of the edges.When only $l$ edges are sorted, thread p can only perform $l \times \lfloor \log{T} \rfloor$ more times comparisons.

\section{Evaluation}

\begin{table}[ht]
    \centering
    \caption{Execution time comparison}
    \begin{tabular}{|c|c|c|c|}
        \hline
         & Baseline & Basic LGRASS & Parallel LGRASS\\
        \hline
        Case 1 & 22.9min & 15.3ms & 6.4ms \\
        \hline
        Case 2 & 25.5min & 10.8ms & 5.0ms\\
        \hline
        Case 3 & 37.1h & 73.7ms & 24.0ms \\
        \hline
    \end{tabular}
    \label{tab:comparison}
\end{table}

We evaluated our method on hardware platform provided by Beijing Super Cloud Computing Center, which is specified by organizers. Table \ref{tab:comparison} demonstrates the comparison among baseline program, basic LGRASS and parallel LGRASS. Compared with baseline program, basic LGRASS can reduce the execution time by several orders of magnitude with only \emph{one} processor. By utilizing multiprocessors, parallel LGRASS gains up to 3.1x speedup against basic LGRASS. 
We also evaluated our proposed method on random test cases. As is shown in Fig. \ref{figure:linearity}, the using time of parallel LGRASS increases linearly as graph size scales up.

\begin{figure}[htbp]
    \centering
    \includegraphics[width=0.7\linewidth]{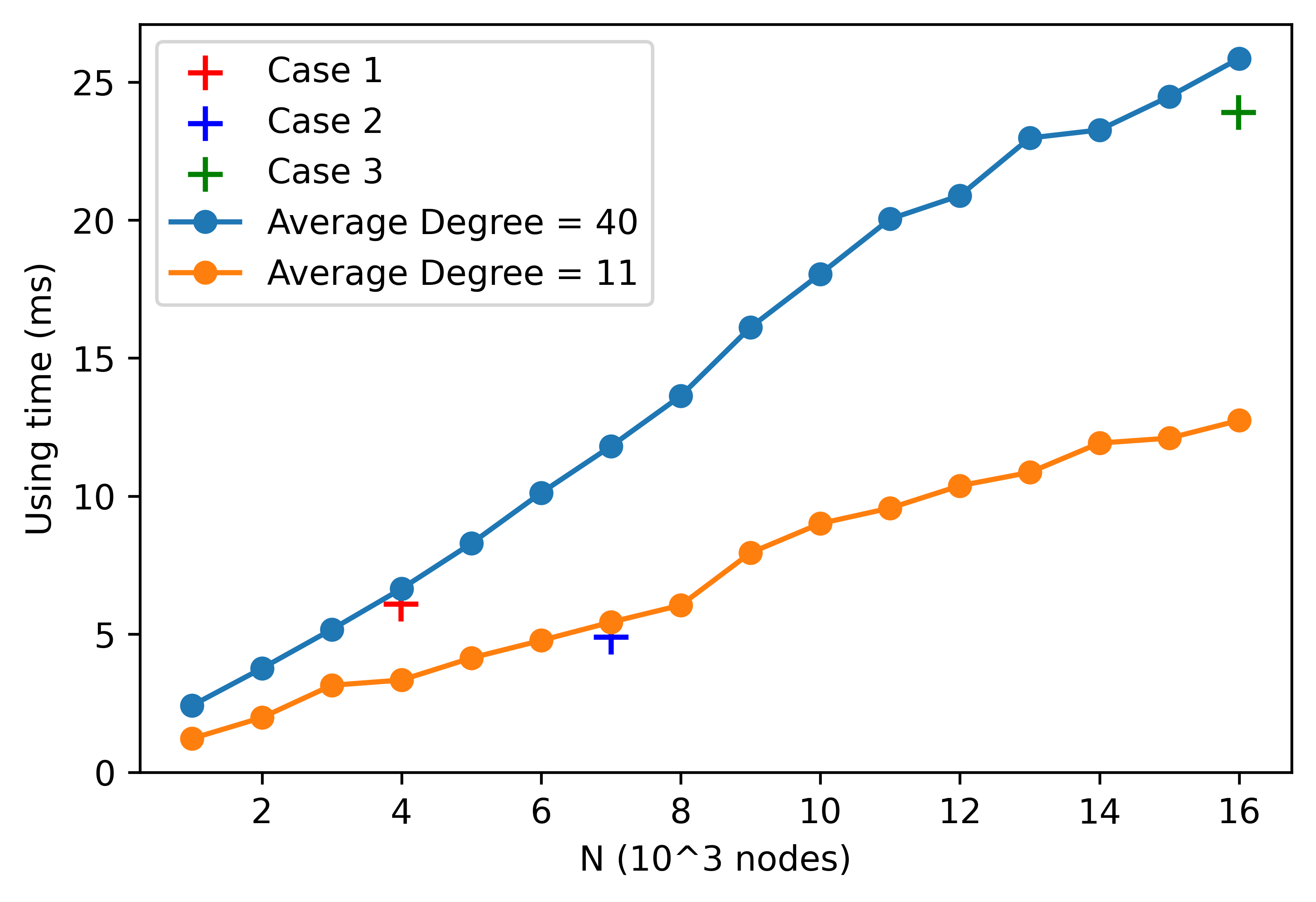}
    \caption{Using Time of Parallel LGRASS on Random Test Cases}
    \label{figure:linearity}
\end{figure}

\section{Conclusion}
In this paper, we present our solution to the final task of the 3rd ACM-China IPCC. We proposed LGRASS, a linear graph spectral sparsification algorithm, and optimized it for parallel processing. Evaluation result shows that our proposed method can fulfils the task in 6.4ms, 5.0ms and 24.0ms respectively on official test cases and keep its linearity as graph size scales up on random test cases.

% conference papers do not normally have an appendix

% use section* for acknowledgment
\ifCLASSOPTIONcompsoc
  % The Computer Society usually uses the plural form
  \section*{Acknowledgments}
\else
  % regular IEEE prefers the singular form
  \section*{Acknowledgment}
\fi

We thank our shepherd, Jian Zhang, for his feedback and encouragement.

% trigger a \newpage just before the given reference
% number - used to balance the columns on the last page
% adjust value as needed - may need to be readjusted if
% the document is modified later
%\IEEEtriggeratref{8}
% The "triggered" command can be changed if desired:
%\IEEEtriggercmd{\enlargethispage{-5in}}

% references section

% can use a bibliography generated by BibTeX as a .bbl file
% BibTeX documentation can be easily obtained at:
% http://mirror.ctan.org/biblio/bibtex/contrib/doc/
% The IEEEtran BibTeX style support page is at:
% http://www.michaelshell.org/tex/ieeetran/bibtex/
%\bibliographystyle{IEEEtran}
% argument is your BibTeX string definitions and bibliography database(s)
%\bibliography{IEEEabrv,../bib/paper}
%
% <OR> manually copy in the resultant .bbl file
% set second argument of \begin to the number of references
% (used to reserve space for the reference number labels box)

% that's all folks
\end{document}